\newcommand\eat[1]{}
\newcommand{\Pref}[1][]{
  \ifthenelse{\equal{#1}{}}{\mathrel \succsim}{\mathop{\succsim_{#1}}}
}   
\newcommand{\sPref}[1][]{ 
  \ifthenelse{\equal{#1}{}}{\mathrel \succ}{\mathop{\succ_{#1}}}
}   
\newcommand{\Indiff}[1][]{   
  \ifthenelse{\equal{#1}{}}{\mathrel \sim}{\mathop{\sim_{#1}}}
}
\newcommand{\prefset}[1][]{\ifthenelse{\equal{#1}{}}{\mathcal{\succsim}}{\mathcal{\succsim}_{#1}}}
\newlength{\wordlength}
\title{Equilibria in Sequential Allocation}
\author{Haris Aziz\inst{1}  and Paul Goldberg\inst{2} and Toby Walsh\inst{1}}
  \institute{%
  Data61, CSIRO and UNSW Australia\\
    \email{\{haris.aziz,toby.walsh\}@data61.csiro.au}\\
    \and
    University of Oxford, UK\\
        \email{paul.goldberg@cs.ox.ac.uk}
  }
\begin{document}

\maketitle

\begin{abstract}
Sequential allocation is a simple mechanism for sharing multiple indivisible items.
We study strategic behavior in sequential allocation. In particular,
we consider Nash dynamics, as well as the computation and Pareto optimality  of pure equilibria, and Stackelberg strategies.
We first demonstrate that, even for two agents, better responses can cycle. 
We then present a linear-time algorithm that returns a profile (which we call the ``bluff profile'') that is in pure Nash equilibrium.
Interestingly, the outcome of the bluff profile is the same as that of the truthful profile and the profile is in pure Nash equilibrium for \emph{all} cardinal utilities consistent with the ordinal preferences. 
We show that the outcome of the bluff profile is Pareto optimal with respect to pairwise comparisons.
In contrast, we show that an assignment may not be Pareto optimal with respect to pairwise comparisons even if it is a result of a preference profile that is in pure Nash equilibrium for all utilities consistent with ordinal preferences. 
Finally, we present a dynamic program to compute an optimal Stackelberg strategy for two agents, where the second agent has a constant number of distinct values for the items.
\end{abstract}

\section{Introduction}

A simple but popular mechanism to allocate indivisible items is {\em sequential allocation} 
\citep{AWX15b,BoLa11a,BrSt79a,BrTa96a,KNW13a,KoCh71a,LeSt12a}. 
Sequential allocation is used, for example, by the Harvard Business School to allocate courses to students~\citep{BuCa12a} as well as multi-million dollar sports drafts~\citep{BrSt79a}.
In a sequential allocation mechanism, 
a {\em picking sequence} specifies the turns of the agents. 
For example, for sequence 1212, agents 1 and 2 alternate with agent 1 taking the first turn.   Agents report their preferences over the items. Then the items are allocated to the agents in the following manner. 
In each turn, the agent in that turn is given the most preferred item that has not yet been allocated.
In this paper we focus on the ``direct revelation'' version where agents submit
their complete rankings at the same time (and are committed to them),
as opposed to the ``extensive form'' version where agents take turns choosing
and are only committed to items chosen previously.
Sequential allocation is an ordinal mechanism since the outcome only depends on the ordinal preferences of agents over items. 
Although the agents are asked to report ordinal preferences, we will assume a standard assumption in the literature that agents have underlying additive utilities for the items.

It has long been known that sequential allocation is not strategy-proof
when agents do not have consecutive turns. 
An agent may not pick their most preferred item
remaining if they expect this item to remain 
till a later turn. Instead, the agent
may pick a slightly less preferred item that
they would not otherwise get. Of course, this
requires reasoning  about how the
agents may behave strategically at the 
same time. 
Since the sequential allocation mechanism is not
strategy-proof, how precisely
should agents behave?
There has already been some
work on strategic behavior in the setting
where sequential allocation is viewed as a repeated game. \citet{KoCh71a} presented a linear-time algorithm to compute a subgame perfect Nash equilibrium (SPNE) when there are two agents and the picking sequence is alternating ($121212\ldots$).
The result was generalized to the case of any sequence~\cite{KNWX13a}.
\citet{BrSt79a} stated that ``no algorithm is known which will
produce optimal play more efficiently than by checking many branches
of the game tree.'' Recently, it was proved that there can be an
exponential number of subgame perfect Nash equilibria and finding
even one of them is PSPACE-hard for an unbounded number of
agents~\cite{KNWX13a}. 

However, it is also natural to view sequential
allocation as a {\em one shot} game rather than a repeated
game. At the Harvard Business School, 
students submit a single ranked list of courses to 
a central organization that runs the sequential
allocation mechanism on these fixed preferences. 
This is essentially then a one shot game. 
This suggests considering the more general 
solution concept
of pure Nash equilibrium rather
than that of subgame perfect Nash equilibrium. 
In this paper, \emph{we will view  sequential allocation as a one shot strategic game in
  which the possible actions of the agents are possible ordinal
  preferences over the items, and the agents know each others' true ordinal
  preferences, as well as the picking sequence}. 
Surprisingly no algorithm to date has yet 
been proposed in the literature for
efficiently computing a pure Nash equilibrium
(PNE). We therefore propose a simple
linear time method to compute a PNE even for
an unbounded number of agents. 
We also consider Pareto optimality of pure Nash equilibria.
This issue is similar to previous work on price of anarchy/stability of equilibria in other strategic domains. 
Finally, we consider Stackelberg strategies in sequential allocation
where an agent announces the preference he or she intends to report.

\paragraph{Results}

We study the computational problems of
finding the equilibria of sequential allocation when
viewed as a one shot game.
No algorithm to date has
been proposed in the literature for
efficiently computing a pure Nash equilibrium
(PNE) of sequential allocation. 

One general method to compute a PNE is
to compute a sequence of better responses.
Indeed, for any finite potential game, this
is guaranteed to find a PNE. 
We first show better responses need not converge to a pure Nash equilibrium. Even for two agents, better responses can cycle. 
Instead, we propose a simple
linear time method to compute the preference
profile of a PNE even for
an unbounded number of agents. 
We refer to the output of this 
algorithm as the \emph{bluff profile}. 
Interestingly, the allocation generated by
the bluff profile is the same as that of the truthful profile,
and this profile is in equilibrium for \emph{all} cardinal utilities consistent with the ordinal preferences. 
The fact that this equilibrium can be computed
in linear time is perhaps a little surprising because computing 
just a single best response with the sequential allocation mechanism
has been recently shown to be NP-hard~\cite{ABLM16a}.
In addition, computing a subgame perfect Nash equilibrium
of the repeated game is PSPACE-hard~\cite{KNWX13a},
and this is a PNE of the one shot game. 
Our result that there exists a
linear-time algorithm to compute a PNE profile in
the one shot game also contrasts with the
fact that computing a PNE profile is NP-hard under the related
probabilistic serial (PS) random assignment mechanism for fair
division of indivisible goods~\cite{AGM+15d}.

We also consider Pareto optimality and
other fairness properties of the pure Nash equilibria (Section~\ref{sec:pareto}).
This is in line with work on the price of anarchy/stability 
of equilibria in other strategic domains. 
We show that the outcome of the bluff profile is Pareto optimal with respect to pairwise comparisons (defined in Section~\ref{sec:pareto}).
Hence, in sequential allocation, pure Nash equilibrium is not incompatible with ordinal Pareto optimality.
On the other hand, we also prove that an assignment may not be Pareto optimal with respect to pairwise comparisons even if it is a result of a preference profile that is PNE for all utilities consistent with ordinal preferences. 

Finally, in Section~\ref{sec:commitment} we show that an agent may have an
advantage from committing and declaring his preference and that committing to the truthful report may not be optimal. For 2 players we present a polynomial-time algorithm to compute an optimal strategy to commit to in the case that the other agent has a small number of utility values.

\section{Preliminaries}

We consider the setting in which we have $N=\{1,\ldots, n\}$ a set of agents, $O=\{o_1,\ldots, o_m\}$ a set of items, and the preference profile $\succ=(\succ_1,\ldots, \succ_n)$ specifies for each agent $i$ his complete, strict, and transitive preference $\succ_i$ over $O$. 

Each agent may additionally express a cardinal utility function $u_i$ consistent with $\succ_i$: $u_i(o)> u_i(o') \text{ iff } o\succ_i o'.$ We will assume that each item is positively valued, i.e, $u_i(o)>0$ for all $i\in N$ and $o\in O$. The set of all utility functions consistent with $\succ_i$ is denoted by $\mathcal{U}(\succ_i)$. We will denote by $\mathcal{U}(\succ)$ the set of all utility profiles $u=(u_1,\ldots, u_n)$ such that $u_i\in \mathcal{U}(\succ_i)$ for each $i\in N$. 
When we consider agents' valuations according to their cardinal utilities,
we will assume additivity, that is $u_i(O')=\sum_{o\in O'}u_i(o)$ for each $i\in N$ and $O'\subseteq O$.

An \emph{assignment} is an allocation of items to agents, represented as an
$n\times m$ matrix $[p(i)(o_j)]_{\substack{1\leq i\leq n, 1\leq j\leq m}}$
such that for all $i\in N$, and $o_j\in O$, $p(i)(o_j)\in \{0,1\}$;  and for all $j\in \{1,\ldots, m\}$, $\sum_{i\in N}p(i)(o_j)= 1$. 
An agent $i$ gets item $o_j$ if and only if $p(i)(o_j)= 1$. Each row $p(i)=(p(i)(o_1),\ldots, p(i)(o_m))$ represents the \emph{allocation} of agent $i$.

We will also present the cardinal utilities in matrix form.
A utility matrix $U$ is an $n\times m$ matrix $[U(i)(j)]_{\substack{1\leq i\leq n, 1\leq j\leq m}}$ such that for all $i\in N$, and $j\in O$, the entry $U(i)(j)$ in the $i$-th row and  $j$-th column is $u_i(o_j)$.
We say that utilities are \emph{lexicographic} if for each agent $i\in N$, $o\in O$, {$u_i(o)>\sum_{o'\prec_i o}u_i(o')$}. By $S\succ_i T$, we will mean $u_i(S)>u_i(T)$.

\begin{example}
Consider the setting in which $N=\{1,2\}$, $O=\{o_1,o_2,o_3,o_4\}$, the preferences of agents are
  \begin{align*}
    &1:\quad o_1, o_2, o_3, o_4&
    &2:\quad o_1, o_3, o_2, o_4
  \end{align*}
Then for the picking sequence $1221$, agent $1$ gets $\{o_1,o_4\}$ while $2$ gets $\{o_2,o_3\}$. The assignment resulting from sequential allocation (SA) can be represented as follows.
\[
SA(\succ_1,\succ_2) = \begin{pmatrix}
    1&0&0&1\\
    0&1&1&0
\end{pmatrix}.
\]
The allocation of agent $1$ is denoted by $SA(\succ_1,\succ_2)(1)$.
\end{example}

For a reported preference profile $(\succ_1',\ldots, \succ_n')$, an agent $i$'s \emph{best response} is a preference report $\succ_i''$ that maximizes utility 
$u_i(SA(\succ_i'',\succ_{-i}')(i))$.
We say that a reported preference profile $(\succ_1',\ldots, \succ_n')$ is in \emph{pure Nash equilibrium (PNE)} if no agent $i$ can report a preference $\succ_i''$ such that $u_i(SA(\succ_i'',\succ_{-i}')(i)) > u_i(SA(\succ')(i)).$

\section{Nash dynamics}\label{sec:nashdynamics}

Since we are interested in computing a PNE, a natural approach is to
simulate better responses and hope they converge.
For finite potential games, such an approach is
guaranteed to find a PNE. 
However, we show that even for two agents, computing better responses 
will not always terminate and is thus not a method that is guaranteed
to find a pure Nash equilibrium.


  %
   
\begin{theorem}
For two agents, better responses can cycle.
\end{theorem}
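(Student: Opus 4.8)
The plan is to exhibit an explicit two-agent instance, together with a picking sequence and a finite sequence of better-response moves that returns to its starting point. Since the sequential allocation outcome $SA(\succ')$ depends only on the reported \emph{ordinal} profile, while whether a deviation is improving depends on the deviating agent's \emph{cardinal} utility, the counterexample will consist of three pieces: (i) a picking sequence $\pi$ in which the two agents' turns are interleaved (if one agent picks a single consecutive block, that agent has a dominant truthful-type report and nothing moves, so interleaving is necessary); (ii) cardinal utilities $u_1,u_2$ inducing the agents' true orders $\succ_1,\succ_2$; and (iii) four ordinal reports $a,a'$ available to agent $1$ and $b,b'$ available to agent $2$ (recall reports need not be truthful).

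The cycle I aim for is the $4$-cycle
\[
(a,b)\ \longrightarrow\ (a',b)\ \longrightarrow\ (a',b')\ \longrightarrow\ (a,b')\ \longrightarrow\ (a,b),
\]
where agent $1$ moves on the first and third arrows and agent $2$ on the second and fourth. This is in fact the shortest possible cycle: if only one agent ever deviated, strict improvement (with the other's report fixed) would forbid a return, so the two agents must alternate, and a single alternation back to the start forces four distinct profiles. Writing $S_{xy}$ for the bundle agent $1$ receives under the four profiles (and $\overline{S}_{xy}=O\setminus S_{xy}$ for agent $2$'s complementary bundle), the requirement that each arrow is a genuine better response reduces to the four strict inequalities
\[
u_1(S_{10})>u_1(S_{00}),\quad u_2(\overline{S}_{11})>u_2(\overline{S}_{10}),\quad u_1(S_{01})>u_1(S_{11}),\quad u_2(\overline{S}_{00})>u_2(\overline{S}_{01}).
\]
So the whole task reduces to choosing $\pi$ and the four reports so that the induced bundles $S_{00},S_{10},S_{11},S_{01}$ are ``twisted'' in exactly this cyclic pattern, and then picking any $u_1\in\mathcal U(\succ_1)$, $u_2\in\mathcal U(\succ_2)$ satisfying these four linear inequalities.

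Concretely, I would fix a small instance --- two agents, a handful of items, and an interleaved sequence (for instance of the form $12\cdots$ on six items, or $1221$ on a slightly larger item set) --- run the deterministic linear-time sequential allocation procedure by hand on the four candidate profiles to tabulate $S_{00},S_{10},S_{11},S_{01}$, and then solve the resulting system of four utility inequalities; a short brute-force search over small instances is the reliable way to locate a working configuration. The proof itself would then be a table listing $u_1,u_2$, the four reported profiles, the four resulting $2\times m$ assignment matrices, and a one-line check of each of the four inequalities, after which better-response dynamics started at $(a,b)$ visits $(a',b),(a',b'),(a,b')$ and returns to $(a,b)$ forever.

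The main obstacle is purely combinatorial bookkeeping: arranging the four reported orders so that the four allocations interlock in the cyclic way above while keeping $m$ and $\pi$ small enough that everything is verifiable by inspection. Two remarks worth recording alongside the example: because $SA$ is ordinal the four inequalities are strict, so the cycle persists for a full-dimensional set of utility profiles rather than a knife-edge choice; and this phenomenon does not contradict the existence of a pure Nash equilibrium (such as the bluff profile constructed later) --- it only shows that naive better-response dynamics is not a valid route to computing one.
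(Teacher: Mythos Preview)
Your approach is exactly the paper's: the theorem is existential, so the proof is an explicit two-agent instance together with a finite alternating sequence of strict better responses that returns to its starting profile. The paper's cycle has precisely the $4$-cycle shape $(a,b)\to(a',b)\to(a',b')\to(a,b')\to(a,b)$ that you describe.

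The gap is that you have written the \emph{plan} but not the \emph{proof}. For a statement of this type the entire content is the concrete instance --- the picking sequence, the four reported orders, the four resulting allocations, and the verification of the four strict inequalities --- and you have not supplied any of it. Phrases like ``a short brute-force search over small instances is the reliable way to locate a working configuration'' and ``the proof itself would then be a table listing $u_1,u_2$, the four reported profiles, \ldots'' describe work still to be done. Without the witness, nothing has been shown. Two subsidiary warnings about the search you propose: first, your suggestion of ``six items'' or ``$1221$ on a slightly larger item set'' is likely too optimistic --- the paper's example uses the alternating sequence on \emph{ten} items, so be prepared to look further than you anticipate. Second, the paper's construction is arranged so that every arrow is a strict improvement for \emph{all} utilities consistent with the ordinal preferences (lexicographic suffices to check), which is cleaner than solving a bespoke system of four cardinal inequalities; you might aim for that stronger property, since it also validates the ``full-dimensional'' robustness remark you make at the end.
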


\begin{proof}
Let the sequence be the alternating one: $121212\ldots$.
The following 5 step sequence of better responses leads
to a cycle.

The ordinal preferences corresponding to the utility functions are as follows.
\begin{align*}
      \succ_1:\quad & o_3,o_4,o_5,o_6,o_9,o_{10},o_7,o_8,o_1,o_2  \\
      \succ_2:\quad & o_9,o_{10},o_5,o_6,o_7,o_8,o_1,o_2,o_3,o_4  
\end{align*}

It is sufficient to consider the agents having lexicographic utilities although the argument works for any utilities consistent with the ordinal preferences. 

    \smallskip
             \noindent
    This yields the following assignment and utilities at the start:

             \noindent

    \[
     SA(\succ_1,\succ_2) = \begin{pmatrix}
        1&0     & 1&1     &1&0     & 1&0    & 0&0     \\
        0&1     & 0&0     &0&1     & 0&1    & 1    &1 \\
     \end{pmatrix}
    \]

  
\noindent
In Step 1, Agent 1 misreports to increase his utility.
\begin{align*}
    \succ_1^1:\quad & o_5, o_6,  o_7, o_8,  o_3, o_4,  o_1, o_2,  o_9, o_{10}  \\
    \succ_2^1:\quad & o_9,  o_{10},  o_5, o_6,  o_7, o_8,  o_1, o_2, o_3, o_4
\end{align*}
    \[
    SA(\succ^1) = \begin{pmatrix}
        0&0     & 1&1     & 1&1         & 1&0        & 0&0     \\
        1&1     & 0&0     & 0&0        & 0&1        & 1    &1 \\
    \end{pmatrix}
    \]

%

       \noindent
       In Step 2, Agent 2 changes his report in response.
\noindent
        \begin{align*}
          \succ_1^2:\quad &o_5, o_6, o_7, o_8, o_3, o_4, o_1, o_2, o_9, o_{10}  \\
          \succ_2^2:\quad & o_5, o_6, o_7, o_8, o_9, o_{10}, o_1, o_2, o_3, o_4
         \end{align*}
    \[
             SA(\succ^2) = \begin{pmatrix}
                 1&0     & 1&1     & 1&0     & 1&0    & 0&0     \\
                 0&1     & 0&0     & 0&1     & 0&1    & 1&1     \\
             \end{pmatrix}
             \]


       \noindent
       In Step 3, Agent 1 changes his report in response.
            \begin{align*}
          \succ_1^3:\quad &o_5, o_6, o_9, o_{10}, o_3, o_4, o_1, o_2, o_7, o_8  \\
          \succ_2^3:\quad &  o_5, o_6, o_7, o_8, o_9, o_{10}, o_1, o_2, o_3, o_4
         \end{align*}
      \[
             SA(\succ^3) = \begin{pmatrix}
                 0&0     & 1&1     & 1&0     & 0&0        & 1&1     \\
                 1&1     & 0&0     & 0&1     & 1&1        & 0&0     \\
             \end{pmatrix}
             \]

       \noindent
       In Step 4, Agent 2 changes his report in response.
\begin{align*}
          \succ_1^4:\quad &o_5, o_6, o_9, o_{10}, o_3, o_4, o_1, o_2, o_7, o_8  \\
          \succ_2^4:\quad &  o_9, o_{10}, o_5, o_6, o_7, o_8, o_1, o_2, o_3, o_4
         \end{align*}
   \[
    SA(\succ^4) = \begin{pmatrix}
        1&0     & 1&1     & 1&1         & 0&0        & 0&0     \\
        0&1     & 0&0     & 0&0        & 1&1        & 1&1     \\
    \end{pmatrix}
    \]

       \noindent
       In Step 5, Agent 1 changes his report in response.
\begin{align*}
          \succ_1^5:\quad &o_5, o_6, o_7, o_8, o_3, o_4, o_1, o_2, o_9, o_{10}  \\
          \succ_2^5:\quad &  o_9, o_{10}, o_5, o_6, o_7, o_8, o_1, o_2, o_3, o_4
         \end{align*}  \[
             SA(\succ^5) = \begin{pmatrix}
                 0&0     & 1&1     & 1&1         & 1&0        & 0&0     \\
                 1&1     & 0&0     & 0&0        & 0&1        & 1&1     \\
             \end{pmatrix}
             \]


\noindent
Since $\succ^1=\succ^5$, we have cycled.
\end{proof}

\section{The Bluff Profile}\label{sec:bluff}

In this section, we outline a  linear-time algorithm to compute a pure Nash equilibrium preference profile. 
Surprisingly, we will show that the preference profile constructed is in pure Nash equilibrium for \emph{all} utilities consistent with the ordinal preferences. 



\begin{quote}
\emph{Simulate sequential allocation with the truthful preferences. Set the preferences of each agent to the order in which the items are picked when simulating sequential allocation under truthful preferences.}
\end{quote}
  
We refer to the profile constructed as the bluff profile since the idea behind the profile is that an agent wants to get the most preferred item immediately because if he does not, some other agent will take it.
We observe the following characteristics of the bluff profile. 
  
%
%

%
%
%
%
%
%

\begin{lemma}\label{lemma:bluff-charac}
In the bluff profile,
  \begin{inparaenum}[(i)]
    \item all agents have the same preferences;
    \item the order in which items are picked is the same as the order in which items are picked under the truthful profile; and
    \item the allocations of agents are the same as in the truthful profile.
  \end{inparaenum}
\end{lemma}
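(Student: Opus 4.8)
The plan is to handle the three claims in order. Claim~(i) is immediate: by construction the bluff profile assigns to \emph{every} agent the single order in which items are picked when sequential allocation is simulated on the truthful profile, so all agents share this common preference. The substantive statement is~(ii), and claim~(iii) will follow from it with essentially no extra work.

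For~(ii), I would write the picking sequence as $(\sigma_1,\dots,\sigma_m)$, so that agent $\sigma_k$ picks at turn $k$, and let $p_1,\dots,p_m$ be the items picked, in order, when the mechanism is run on the truthful profile. Since every item is eventually picked, $p_1,\dots,p_m$ is a permutation of $O$, and by definition the common bluff preference is exactly $p_1 \succ p_2 \succ \cdots \succ p_m$. The claim to prove by induction on $k$ is that the run of the mechanism on the bluff profile also picks $p_1,\dots,p_k$ at its first $k$ turns. The base case $k=0$ is vacuous. For the inductive step, assume the bluff run has picked $p_1,\dots,p_{k-1}$ at turns $1,\dots,k-1$; then the items still available at turn $k$ are precisely $\{p_k,\dots,p_m\}$, and agent $\sigma_k$, picking greedily according to the bluff order, takes the available item appearing earliest in $p_1 \succ \cdots \succ p_m$, which is $p_k$. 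Hence the bluff run picks $p_k$ at turn $k$; taking $k=m$ yields~(ii).

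Claim~(iii) then follows because the picking sequence $(\sigma_1,\dots,\sigma_m)$ is a fixed parameter of the mechanism, identical in both runs, and by~(ii) the item picked at each turn $k$ is $p_k$ in both; so agent $\sigma_k$ receives $p_k$ in each run and the two assignments coincide. I do not expect a real obstacle here; the only point needing slight care is to phrase the inductive hypothesis in terms of the whole prefix $p_1,\dots,p_{k-1}$ of picked items (equivalently, the set of remaining items), so that the greedy choice at turn $k$ is forced to equal $p_k$. Beyond that the argument is pure bookkeeping and, in particular, does not depend on the agents' cardinal utilities.
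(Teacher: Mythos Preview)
The paper does not actually provide a proof of this lemma; it is stated as an observation and left without argument. Your proof is correct and is precisely the natural argument one would give: claim~(i) is by construction, claim~(ii) is a clean induction on the turn index using that the common bluff order lists the remaining items in the order $p_k, p_{k+1}, \dots, p_m$, and claim~(iii) is immediate from~(ii) together with the fact that the picking sequence is fixed. There is nothing to add or compare against.
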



We show that the bluff profile is in pure Nash equilibrium if the utilities are lexicographic. 

\begin{lemma}\label{lemma:bluff-lex}
The bluff profile is in pure Nash equilibrium if the utilities are lexicographic. 
\end{lemma}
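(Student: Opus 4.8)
The plan is to argue that in the bluff profile, where every agent reports the common ordering $\sigma$ in which items are picked under the truthful run, no agent $i$ can benefit from deviating to any report $\succ_i''$. Fix agent $i$ and let $t_1 < t_2 < \dots < t_k$ be the turns at which $i$ picks under the sequence (these are the same whether we use $\succ_i$ or the bluff report, by Lemma~\ref{lemma:bluff-charac}). In the bluff profile, at turn $t_1$ the item available to $i$ is the $\sigma$-maximal remaining item, which is exactly the item $i$ receives; call it $a_1$. More generally I want to show that at turn $t_j$, the set of items still available is a \emph{down-set} of $\sigma$ of a predictable size, and $i$ is handed its $\sigma$-best available item $a_j$. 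The key structural fact is that since all the other agents also report $\sigma$, every pick made by anyone up to turn $t_j$ removes the current $\sigma$-maximum of the remaining items; hence after $s$ total picks the remaining items are precisely the $m-s$ items that are $\sigma$-smallest, i.e.\ the remaining set is always an initial segment of the reverse of $\sigma$.

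Given that, the core of the argument is an exchange/greedy comparison. Under lexicographic utilities, $u_i$ is such that a single more-preferred item outweighs any bundle of less-preferred items; so comparing two bundles reduces to comparing them in the $\succ_i$-order greedily: the bundle whose $\succ_i$-largest element is larger wins, with ties broken by the next element, and so on. Now suppose $i$ deviates to some $\succ_i''$ and receives a bundle $B'' = \{b_1, \dots, b_k\}$ (listed in $\succ_i$-decreasing order) instead of the bluff bundle $B = \{a_1, \dots, a_k\}$ (also in $\succ_i$-decreasing order — note $a_j$ need not be picked in $\succ_i$-order, but we reindex). I claim $a_1 \Pref[i] b_1$, and more generally that $B$ lexicographically dominates $B''$ in the $\succ_i$-order, which by lexicographic utilities gives $u_i(B) \ge u_i(B'')$. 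The reason $a_1 \Pref[i] b_1$: at turn $t_1$, whatever $i$ picks, the available set is the fixed down-set $D_1$ of $\sigma$ determined by the $t_1-1$ picks the other agents made before $i$'s first turn (these picks are unaffected by $i$'s report); and — this is the point that needs care — among all items $i$ could ever end up with under any report, the best in $\succ_i$ is no better than the $\succ_i$-best element of $D_1$, because any item outside $D_1$ was already taken before $i$'s first move. The bluff report picks exactly that $\succ_i$-best element of $D_1$ (here I need that the $\sigma$-maximum of $D_1$ is also its $\succ_i$-maximum — which holds because $D_1$ is a down-set of $\sigma$ and $\sigma$ restricted to $D_1$ orders its elements the same way $i$ would want among "reachable" items; this is exactly where the bluff construction is doing its work and I will need to verify it from how $\sigma$ is built).

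To push this from the first turn to all turns, I would set up an induction on $j$: assuming $a_1, \dots, a_{j-1}$ already lexicographically dominate $b_1, \dots, b_{j-1}$, consider turn $t_j$. After $i$'s bluff picks $a_1, \dots, a_{j-1}$ and the intervening picks of others, the available set $D_j$ is again a fixed down-set of $\sigma$ (independent of $i$'s report for the part contributed by others, and determined for $i$'s part by the inductive hypothesis), and $i$ is handed its $\succ_i$-best element of $D_j$. Any alternative item $i$ might hold at this stage lies in $D_j$ or was taken earlier, so $a_j$ is at least as good as $b_j$ in $\succ_i$, completing the induction. I expect the main obstacle to be precisely the claim that the $\sigma$-maximum of each reachable down-set $D_j$ coincides with its $\succ_i$-maximum — equivalently, that $i$ never "wants" an already-gone item more than what the bluff handed it. This requires exploiting the definition of $\sigma$ (items ordered by when they are picked truthfully) together with the fact that under truthful play agent $i$ picks its genuine favourite among whatever remains; making that precise, and handling the bookkeeping of which picks are and are not affected by $i$'s deviation, is where the real work lies. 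The lexicographic-to-greedy reduction and the induction skeleton are routine once that lemma is in hand.
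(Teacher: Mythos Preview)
Your approach differs from the paper's and has a genuine gap in the inductive step.

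The paper's argument is a simple \emph{threat} argument exploiting that all other agents report the same list $\sigma$. At agent $i$'s turn, the $\sigma$-first remaining item $o$ is also $i$'s $\succ_i$-favourite remaining item (this is immediate from how $\sigma$ is built: under truthful play, $i$ himself chose $o$ at that turn). If $i$ declines $o$, then the agent at the very next turn---who also reports $\sigma$---picks $o$, so $i$ loses $o$ permanently. Under lexicographic utilities that single loss outweighs any possible gain from the rest of the run, so $i$'s best response must take $o$; inductively, $i$ follows the bluff at every turn. The whole proof is this one-line threat, repeated.

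Your plan instead tries a direct bundle-versus-bundle lexicographic comparison. The obstacle you flag---that the $\sigma$-maximum of $D_j$ coincides with its $\succ_i$-maximum---is not the hard part; it is literally the definition of $\sigma$ applied at turn $t_j$. The real gap is in your inductive step: you write that under a deviation ``the available set $D_j$ is again a fixed down-set of~$\sigma$'' at turn $t_j$, but this is false. Once $i$ deviates at turn $t_1$, the other agents' subsequent picks change, and the set available to $i$ at turn $t_j$ need not be $\{\sigma(t_j),\ldots,\sigma(m)\}$. Your conclusion ``$a_j$ is at least as good as $b_j$'' is asserted, not argued: $b_j$ is the $j$-th $\succ_i$-best element of the \emph{final} deviation bundle $B''$, and nothing you have written bounds it.

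Your route can be rescued, but it needs a different counting argument rather than a statement about $D_j$. Observe that at any non-$i$ turn $\tau$, the item picked is $\sigma(f_\tau)$ with $f_\tau\le\tau$ (since only $\tau-1$ items have been removed, some $\sigma(r)$ with $r\le\tau$ survives and is $\sigma$-first). Hence the $t_j-j$ non-$i$ turns before turn $t_j$ remove $t_j-j$ distinct items from $\{\sigma(1),\ldots,\sigma(t_j-1)\}$, so $i$ can obtain at most $j-1$ items from that set over the \emph{entire} run. Since every item $\succ_i$-better than $a_j=\sigma(t_j)$ lies in that set, $B''$ contains at most $j-1$ items $\succ_i$-above $a_j$, giving $a_j\Pref[i] b_j$. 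This is the missing lemma; once you have it your induction goes through. But note that the paper's threat argument gets there in one line without any of this bookkeeping.
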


\begin{proof}
We prove by induction on the number of picks that no agent has an incentive to pick some other item when his turn comes which means that he picks the same item that he picks in the bluff profile which is also the most preferred item among the available items. This is equivalent to proving that no agent has an incentive to change his report from that in the bluff profile.

For the base case, let us consider the first agent who takes the first turn.
If he does not take his most-preferred item, the next agent will take it.
Since utilities are lexicographic, the first agent gains most by getting his most-preferred item.
Regarding the other agents, they are not disadvantaged by placing that item first in their preferences lists,
since it is taken by the first agent. It does not affect their ability to express their preferences
amongst the remaining items.

Similarly, let us assume that agents in the first $k$ turns did not have an incentive to misreport and pick some item other than the most preferred available item. Then we show that agent $j$ in the $k+1$-st turn does not have an incentive to change his report. Note that the item that $j$ picks according to the bluff profile is his most preferred item $o$ amongst those still available. This is because the order in which items are picked and allocations that are made exactly coincide with the truthful profile. Now if $j$ does not make the consistent pick, he will not be able to recover the loss of not getting $o$ because the utilities are lexicographic. Again, for the other agents (who do not get $o$) it does not disadvantage them to put $o$ in the $k+1$-st place in their preference list.
\end{proof}

We now prove the following lemma. 

\begin{lemma}\label{lemma:nto2}
Consider a profile in which all agents in $N\setminus \{i\}$ report the same preferences. Then agent $i$'s best response results in the same allocation for him for all utilities consistent with the ordinal preferences.
\end{lemma}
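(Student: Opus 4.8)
The plan is to describe exactly which bundles agent $i$ can obtain as his report ranges over all preferences, observe that this family is the set of bases of a matroid that does not depend on cardinal utilities, and conclude that the best response is the unique maximum-weight basis, which a greedy argument pins down using only the ordinal preference $\succ_i$.

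Write $\succ$ for the common report of the agents in $N\setminus\{i\}$ and relabel the items so that $\succ$ ranks them $o_1\succ o_2\succ\dots\succ o_m$, and set $\mathrm{rank}(o_\ell)=\ell$. Let $t_1<t_2<\dots<t_k$ be the positions of agent $i$'s turns in the picking sequence. A report of agent $i$ merely forces some sequence of picks $a_1,\dots,a_k$ (the item taken at his $j$-th turn), and conversely every sequence of feasible picks is induced by some report; so it suffices to understand which sets $\{a_1,\dots,a_k\}$ can occur. First I would prove, by induction on $j$, an \emph{availability lemma}: immediately after agent $i$'s $j$-th turn, the set of items already taken by the other agents is exactly the $\succ$-top $(t_{j+1}-j-1)$ items of $O\setminus\{a_1,\dots,a_j\}$. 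This uses only that every other agent always takes the $\succ$-best remaining item and that $a_1,\dots,a_j$ have been removed. Consequently the set of items available to agent $i$ at his $j$-th turn is precisely the $\succ$-worst $m-t_j+1$ items of $O\setminus\{a_1,\dots,a_{j-1}\}$.

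Next I would show that any achievable bundle $S$ is achievable by picking its elements in $\succ$-decreasing order: given a sequence of feasible picks with an adjacent inversion $a_j\prec a_{j+1}$, a short case analysis using the availability lemma shows that swapping these two picks is still feasible, leaves every later availability set unchanged, and yields the same final bundle; bubble-sorting then removes all inversions. Combined with the availability lemma, a size-$k$ bundle $S$ written in $\succ$-decreasing order as $s_1\succ\dots\succ s_k$ is achievable if and only if $\mathrm{rank}(s_j)\ge t_j$ for every $j$. This is exactly Hall's condition for matching $S$ into slots $1,\dots,k$, where slot $j$ accepts precisely the items of $\succ$-rank at least $t_j$; hence the achievable bundles are precisely the bases of the transversal matroid $\mathcal{M}$ determined by this bipartite graph (a genuine matroid, of rank $k$ since $\{o_{t_1},\dots,o_{t_k}\}$ is achievable). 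Crucially, $\mathcal{M}$ depends only on $\succ$ and the picking sequence, not on any cardinal utilities.

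Finally, agent $i$'s best-response bundle is a maximum-weight basis of $\mathcal{M}$ under the weights $w(o)=u_i(o)$. Since $u_i\in\mathcal{U}(\succ_i)$ is consistent with the strict order $\succ_i$, all item weights are distinct, so $\mathcal{M}$ has a \emph{unique} maximum-weight basis, and the matroid greedy algorithm — scan items in $\succ_i$-decreasing order, adding each one whose addition keeps the set independent in $\mathcal{M}$ — produces it. Because this procedure consults only the order $\succ_i$, never the magnitudes of the $u_i(o)$, the resulting bundle is the same for every $u_i\in\mathcal{U}(\succ_i)$, which is the claim. I expect the main obstacle to be the availability lemma together with the exchange argument reducing every achievable bundle to a $\succ$-sorted one; once the matroid characterisation is established the conclusion follows from standard facts. (One can also avoid naming transversal matroids and instead argue directly that the greedy bundle is elementwise at least as good, in the $\succ_i$ order, as every achievable bundle, but the exchange step is the same.)
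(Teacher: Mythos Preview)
Your argument is correct, and it takes a genuinely different route from the paper. The paper's proof is a two-line reduction: since every agent in $N\setminus\{i\}$ reports the same preference, from agent~$i$'s viewpoint they may be merged into a single opponent, whereupon one invokes the two-agent result of Bouveret and Lang~\cite{BoLa14b} that a best response yields the same allocation for all utilities consistent with the ordinal preference. You instead give a self-contained proof, essentially re-deriving (and slightly generalising) that cited result: you characterise the achievable bundles of agent~$i$ as exactly the bases of a transversal matroid determined by the common opponent order and the picking sequence, and then observe that the unique maximum-weight basis is found by the matroid greedy algorithm, which consults only $\succ_i$. This buys you an explanation of \emph{why} the best-response allocation is ordinally determined, and avoids the external citation, at the cost of more work. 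One small quibble: in your availability lemma the phrase ``immediately after agent~$i$'s $j$-th turn'' should read ``immediately before agent~$i$'s $(j{+}1)$-th turn'' (others have had only $t_j-j$ turns right after turn~$t_j$); the corrected count $t_{j+1}-j-1$ and the consequence you actually use---that at turn~$t_j$ the available items are the $\succ$-worst $m-t_j+1$ elements of $O\setminus\{a_1,\dots,a_{j-1}\}$---are both right, so this does not affect the argument.
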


\begin{proof}
When all agents in $N\setminus \{i\}$ report the same preferences, then for agent $i$, from the perspective of agent $i$, all the turns of agents in $N\setminus \{i\}$ can be replaced by a single agent, representative of $N\setminus \{i\}$ who has the same preferences as agents in $N\setminus \{i\}$. Thus, computing a best response for agent $i$ when all agents in $N\setminus \{i\}$ report the same preferences is equivalent to computing a best response for agent $i$ when there is only one other agent (with the same preference as the agents in $N\setminus \{i\}$) and each turn of agents in $N\setminus \{i\}$ is replaced by the representative agent.
When there is one other agent, \citep{BoLa14b} proved that the best response results in the same allocation for the agent for all utilities consistent with the ordinal preferences.\footnote{This argument does not work when the number of other agents is more than one and they have different preferences. It can be shown that for three or more agents, best responses need not result in the same allocation.}
\end{proof}

Combining Lemmas~\ref{lemma:bluff-lex} and \ref{lemma:nto2}, we are in a position to prove the following:

\begin{theorem}\label{thm:consistent}
The bluff profile is in pure Nash equilibrium under all utilities consistent with the ordinal preferences.
\end{theorem}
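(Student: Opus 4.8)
The plan is to bootstrap from the two preceding lemmas, using Lemma~\ref{lemma:bluff-lex} to handle the lexicographic case and Lemma~\ref{lemma:nto2} to transfer it to arbitrary consistent utilities. Fix an arbitrary utility profile $u\in\mathcal{U}(\succ)$ and an arbitrary agent $i\in N$. By Lemma~\ref{lemma:bluff-charac}(i), in the bluff profile every agent reports the same preference order, so in particular all agents in $N\setminus\{i\}$ report identical preferences. This is exactly the hypothesis of Lemma~\ref{lemma:nto2}, hence agent $i$'s best response against the others' bluff reports yields an allocation $A_i$ that is the same for every $u_i\in\mathcal{U}(\succ_i)$ (and, the mechanism being ordinal, is unaffected by the others' cardinal utilities).

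The remaining task is to identify $A_i$ with the bundle $i$ actually receives in the bluff profile, which by Lemma~\ref{lemma:bluff-charac}(iii) equals the truthful allocation $SA(\succ)(i)$. To do this I would specialize to a lexicographic utility profile $\tilde u\in\mathcal{U}(\succ)$. By Lemma~\ref{lemma:bluff-lex} the bluff profile is a PNE under $\tilde u$, so reporting the bluff preference is already a best response for $i$; therefore $A_i$ has the same $\tilde u_i$-value as $SA(\succ)(i)$. Now I would invoke the genericity of lexicographic utilities: if $S\neq T\subseteq O$ and $o^\ast$ is the $\succ_i$-most-preferred item in their symmetric difference, say $o^\ast\in S\setminus T$, then $\tilde u_i(S)-\tilde u_i(T)\ge \tilde u_i(o^\ast)-\sum_{o'\prec_i o^\ast}\tilde u_i(o')>0$, so distinct bundles get distinct total values. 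Hence the $\tilde u_i$-best bundle is unique and $A_i=SA(\succ)(i)$.

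Putting the pieces together: for the original utility $u_i$, Lemma~\ref{lemma:nto2} gives that $i$'s best-response allocation against the others' bluff reports is this same $A_i=SA(\succ)(i)$, which is precisely what $i$ obtains in the bluff profile. Thus $i$ has no profitable deviation under $u_i$. Since $i$ and $u$ were arbitrary, the bluff profile is a PNE under every utility profile consistent with the ordinal preferences, as claimed.

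The main obstacle, and the reason Lemma~\ref{lemma:bluff-lex} alone does not suffice, is precisely the transfer from lexicographic to general consistent utilities: a deviation that is unprofitable under lexicographic utilities could a priori be profitable under some other consistent $u_i$, because the induced ranking of \emph{bundles} — unlike the ranking of single items — is not pinned down by the ordinal preferences. Lemma~\ref{lemma:nto2} is exactly what bridges this gap, and the only additional ingredient needed is the uniqueness-of-optimal-bundle property of lexicographic utilities used to pin down $A_i$.
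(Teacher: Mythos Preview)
Your proof is correct and follows essentially the same route as the paper's: use Lemma~\ref{lemma:bluff-charac}(i) to put yourself in the hypothesis of Lemma~\ref{lemma:nto2}, invoke Lemma~\ref{lemma:bluff-lex} for the lexicographic case, and conclude that the utility-independent best-response allocation coincides with the bluff allocation. The one place where you are more explicit than the paper is in justifying that last identification via the ``distinct bundles get distinct lexicographic values'' argument; the paper simply asserts the coincidence, so your extra step is a welcome bit of rigor rather than a deviation in approach.
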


\begin{proof}
From Lemma~\ref{lemma:bluff-lex}, we know that the bluff profile is in pure Nash equilibrium if the utilities are lexicographic. From Lemma~\ref{lemma:bluff-charac}, we know that all agents have the same preferences in the bluff profile. This immediately implies that for any agent $i$, all agents in $N\setminus \{i\}$ report the same preferences. 
From Lemma~\ref{lemma:nto2}, each agent $i$'s best response to the bluff profile results in the same unique allocation for all utilities consistent with the ordinal preferences. This allocation should be the same as allocation achieved by $i$ when he reports the bluff preferences because they yield the best allocation under lexicographic utilities. Hence the bluff profile is in pure Nash equilibrium under all utilities consistent with the ordinal preferences.
\end{proof}

\section{The Crossout Profile}\label{sec:crossout}

Since sequential allocation can also be viewed as a perfect information extensive form game, it admits a SPNE (Subgame-Perfect Nash Equilibrium) and hence a pure Nash equilibrium for the \emph{game tree}.\footnote{For readers not familiar with extensive form games and Subgame-Perfect Nash Equilibrium), we refer them to \citep{LeSh08a}.}
Computing a SPNE of the game tree  is PSPACE-complete~\citep{KNWX13a}. On the other hand, the optimal play for the extensive form game can be computed in polynomial time for the case of two agents.
The strategy corresponding to the SPNE is to play so that the last
agent gets their least preferred item, the second from last, their
next least preferred item, and so on.
We first show that for the case of two agents, similar ideas can also be used to construct a PNE preference profile for the one-shot game.




%


We use the expression {\em crossout profile} to refer to the preference profile in which both agents have the preferences which are the same as the item picking ordering in the optimal play of perfect information extensive form game. The crossout preference profile can be computed as follows:

\begin{quote}
\emph{Reverse and then invert (exchange 1s with 2s and vice versa) the picking sequence. Reverse the preferences of the two agents. Find the order $L$ in which items are allocated to the agents according to the new picking sequence and preferences. Return reverse of $L$ as the preference of each agent.}
\end{quote}

We now show that the crossout profile is in PNE for certain utilities consistent with the ordinal preferences.
We say that utilities are \emph{upward lexicographic} if for each agent $i\in N$ and two allocations with equal number of items, if the agent prefers the allocation with the better least-preferred item and in case of equality, the one with a better second-least-preferred item and so on. Such a preference relation can be captured by cardinal utilities as follows.
If agent $i$ has ordinal preferences $o_1,o_2,\ldots, o_m$, then utilities are as follows: $u_i(o_j)=1 - (1/2^{m+1-j})$ for all $j\in \{1,\ldots, m\}$.


%
%
%
%

\begin{lemma}\label{th:up-lex}
For two agents and for upward lexicographic utilities, the crossout profile is in PNE.
\end{lemma}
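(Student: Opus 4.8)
The plan is to mirror the structure used for the bluff profile, but "from the back." Recall that the crossout profile is obtained by running sequential allocation on the reversed-and-inverted picking sequence with the reversed true preferences, recording the allocation order $L$, and then reversing $L$; so by construction both agents report the common preference $\succ^* = \mathrm{rev}(L)$. Crucially, the \emph{last} item in this reversed order — i.e.\ the first item picked in the auxiliary reversed run — is, under the original sequence, the last item to be allocated, and it goes to whichever agent has the last turn; the second-to-last picked item goes to the agent with the second-to-last turn under $\succ^*$; and so on. So the crossout run of the original sequence produces exactly the SPNE-type allocation in which the agent holding the final turn receives her least-preferred item among whatever is "left for her," the penultimate turn the next-least-preferred, etc. I would first record this as an explicit structural fact (the analogue of Lemma~\ref{lemma:bluff-charac}): in the crossout profile, the $t$-th item allocated under the original sequence is the one that the corresponding agent picks, and it is determined by a greedy-from-the-end procedure.

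Next, the core of the argument is an induction establishing that under upward lexicographic utilities no agent can profitably deviate — i.e.\ the analogue of Lemma~\ref{lemma:bluff-lex}. Here the induction should run on the turns \emph{in reverse order}, from the last turn back to the first. Base case: consider the agent $a$ who has the very last turn. Whatever the other reports, when her final turn comes she simply takes the best remaining item; in the crossout profile the single item left for her is, by the construction above, her least-preferred among the items not yet taken by the others — but with upward lexicographic utilities, a deviation by $a$ can only shuffle which items she is "stuck with" earlier while the crucial point is that she cannot do better on the item that matters least, and more to the point, any attempt to grab a better item earlier forces a strictly worse least-preferred item later, which dominates the comparison. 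For the inductive step, assume the agents occupying the last $k$ turns cannot improve and are taking their crossout picks; then for the agent at the $(k{+}1)$-st-from-last turn, deviating to take a more-preferred item now means conceding a less-preferred item at some later turn of hers — and by the upward lexicographic tie-breaking, the change at the later (lower-ranked) position strictly decides the comparison against her. The other agents are, as in the bluff argument, never hurt by having this item sit at that position in their common list, since it is taken by someone else at that turn. This gives PNE under upward lexicographic utilities directly.

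The main obstacle I anticipate is making the inductive step genuinely rigorous: unlike the bluff/lexicographic case, where "the item you give up now can never be recovered" is immediate because top items dominate, here the dominance runs the other way and one must argue carefully about \emph{which} later turn absorbs the loss and why the induced change lands at a position that the upward lexicographic order treats as decisive. In particular one has to rule out that a deviation reshuffles the set the deviator ends up with in a way that improves a low-ranked slot while only worsening a high-ranked slot — the claim is that the crossout pick already optimizes the low-ranked slots lexicographically given the opponents' fixed reports, and this needs the "single representative opponent" reduction (Lemma~\ref{lemma:nto2} is about best responses being allocation-unique for one opponent, and all non-$i$ agents report the same preference in the crossout profile, so the reduction applies). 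I would therefore structure the later turns of the deviator's allocation as the output of a fixed two-agent (deviator vs.\ representative) subgame on the residual items and invoke the known optimality of the reverse-greedy strategy there, which is exactly what upward lexicographic utilities reward. Assembling these pieces — structural fact, reverse induction, and the two-agent reduction — yields the lemma.
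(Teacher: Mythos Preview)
Your overall instinct to argue ``from the back'' is right, and the paper's proof also proceeds by peeling off the last-allocated item and recursing. But your base case rests on a mistaken reading of the crossout construction. You claim that the item left at the very last turn is the least-preferred item of the agent $a$ who holds that last turn. In fact it is the least-preferred item of the \emph{other} agent. In the auxiliary run the first picker is $3-\pi(m)$ (the reversed-then-inverted sequence begins with whoever does \emph{not} have the final turn in the original), and she picks with her reversed preference, i.e.\ her own true least-preferred item; that item is therefore last in $\mathrm{rev}(L)$. Concretely: with sequence $12$ and true preferences $\succ_1:a,b$ and $\succ_2:b,a$, the crossout preference is $a,b$ for both, and agent~$2$ ends up with $b$ --- agent~$1$'s worst item, but agent~$2$'s best.

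This matters because your upward-lexicographic step says ``agent $a$ cannot do better on the item that matters least.'' But $o_m$ need not be $a$'s least-preferred item, so the upward-lex comparison for $a$ need not hinge on $o_m$ at all. The paper's argument uses the correct identification: writing $i$ for the agent \emph{without} the last turn and $o_m$ for $i$'s worst item, it observes (i) at each of $i$'s turns at least two items remain, so $i$ can --- and under upward-lex utilities will --- avoid $o_m$, and the crossout report for $i$ does avoid it; (ii) agent $-i$ receives $o_m$ no matter how she deviates, since $i$'s fixed crossout report ranks $o_m$ last and $i$ never touches it, so $-i$ may as well rank $o_m$ last too. With $o_m$ thus pinned to $-i$ under both the crossout profile and any unilateral deviation, one deletes $o_m$ and the final turn and recurses on $m-1$ items.

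Two further remarks. Your induction is phrased over \emph{turns}, but PNE is a per-\emph{agent} condition on full reports; the item-peeling recursion above treats both agents at once and avoids that mismatch. And invoking Lemma~\ref{lemma:nto2} here is unnecessary (there are already only two agents), while your appeal to the ``known optimality of the reverse-greedy strategy'' risks circularity --- that is essentially what this lemma is meant to establish for upward-lex utilities. The Bouveret--Lang uniqueness result is what the paper uses \emph{after} this lemma, to lift it from upward-lex to all consistent utilities.
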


\begin{proof}[Proof Sketch]
Consider agent $i\in \{1,2\}$ and denote by $-i$ the other agent.
Let $\pi$ be the sequence of turns of the agents so that $\pi(j)$ is the agent with the $j$-th turn.
Now if agent $-i=\pi(m)$ has the last turn, then in the first $m-1$ turns, whenever agent $i$'s turn comes, he has an option to get an item better than $i$'s least preferred item $o_m$. Hence $i$ can guarantee to not get his least preferred item $o_m$ and hence guarantee $-i$ to get $o_m$. This will always be the best response for agent $i$ if he has upward lexicographic utilities. Since $-i$ gets $o_m$ in any case, $-i$ may as well rank $o_m$ last, and use his higher slots to prioritise amongst the other items.
We can now consider a situation in which $o_m$ does not exist in $O$ and it is fixed as the least preferred item in both agents' preferences.  Then the same argument can be applied recursively. 
\end{proof}

Lemma~\ref{th:up-lex} can be used to prove the following theorem.

\begin{theorem}
For two agents and for all utilities consistent with the ordinal preferences, the crossout profile is in PNE.
\end{theorem}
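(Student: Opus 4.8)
The plan is to follow exactly the template by which Lemma~\ref{lemma:bluff-lex} was lifted to Theorem~\ref{thm:consistent}, now with the crossout profile in place of the bluff profile and Lemma~\ref{th:up-lex} playing the role of the lexicographic base case. Two structural facts make this possible. First, in the crossout profile both agents report the \emph{same} preference list $L'$, so from either agent's viewpoint the single other agent reports a fixed preference, which is precisely the hypothesis of Lemma~\ref{lemma:nto2}. Second, the upward lexicographic utilities $u_i(o_j)=1-(1/2^{m+1-j})$ are strictly decreasing in the rank position and hence lie in $\mathcal{U}(\succ_i)$; so Lemma~\ref{th:up-lex} is a statement about one particular member of $\mathcal{U}(\succ)$, from which we will transfer to all of $\mathcal{U}(\succ)$.

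First I would fix an arbitrary utility profile $u\in\mathcal{U}(\succ)$ and an agent $i\in\{1,2\}$, letting $-i$ be the other agent, who reports the crossout preference $L'$. Since $SA$ is an ordinal mechanism, running it on the crossout profile yields a fixed allocation regardless of utilities; let $A_i$ be the bundle it assigns to $i$. By Lemma~\ref{lemma:nto2} applied with $N\setminus\{i\}=\{-i\}$, agent $i$'s best response against $-i$'s fixed report $L'$ results in a bundle $B_i$ that is the \emph{same} for every utility consistent with $\succ_i$.

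Next I would pin down $B_i$. Instantiate the previous observation at the upward lexicographic utility $u^{\mathrm{ul}}_i\in\mathcal{U}(\succ_i)$: by Lemma~\ref{th:up-lex} the crossout report $L'$ is a best response for agent $i$ under $u^{\mathrm{ul}}$, so it achieves the best-response bundle, giving $A_i=B_i$. Since $B_i$ does not depend on which consistent utility is used, under the original $u_i$ the best response also delivers exactly $A_i$; hence reporting $L'$ already gives agent $i$ a best-response bundle under $u_i$, and no deviation can make $u_i(SA(\cdot)(i))$ strictly larger. As this holds for $i=1$, for $i=2$, and for every $u\in\mathcal{U}(\succ)$, the crossout profile is in PNE for all utilities consistent with the ordinal preferences.

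The argument is essentially routine once Lemmas~\ref{th:up-lex} and~\ref{lemma:nto2} are available; the only point needing a moment's care is the transfer step — one must use that the bundle $B_i$ furnished by Lemma~\ref{lemma:nto2} is genuinely utility-independent (so the value extracted at $u^{\mathrm{ul}}$ carries over verbatim to an arbitrary $u$), together with the fact that $u^{\mathrm{ul}}\in\mathcal{U}(\succ_i)$ so that Lemma~\ref{th:up-lex} legitimately serves as the base point of the transfer.
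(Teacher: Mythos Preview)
Your proposal is correct and follows essentially the same approach as the paper: the paper's proof simply invokes Lemma~\ref{th:up-lex} for the upward-lexicographic base case and then the Bouveret--Lang two-agent best-response result (the content of Lemma~\ref{lemma:nto2}) to transfer to all consistent utilities. Your write-up just spells out the transfer step more explicitly, and your appeal to both agents reporting the same list $L'$ is harmless but unnecessary here, since with $n=2$ the set $N\setminus\{i\}$ is a singleton and Lemma~\ref{lemma:nto2} applies automatically.
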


\begin{proof}
From Lemma~\ref{th:up-lex}, we know that for two agents and for upward lexicographic utilities, the crossout profile is in PNE.
When there is one other agent, Bouveret and Lang \citep{BoLa14b} proved that the best response results in the same allocation for the agent under all utilities consistent with the ordinal preferences.
\end{proof}

Next we show that even for two agents, the outcome of a crossout profile may not be the same as the truthful assignment.

\begin{example}\label{example:SPNE-not-truthful}
Even for two agents, the outcome of the crossout profile (and hence the SPNE assignment) may not be the same as the truthful assignment.
Consider the sequence 1212 and profile:
\begin{align*}
  \succ_1:\quad &a,b,c,d&
  \succ_2:\quad & b,c,a,d
\end{align*}

The picking sequence obtained after reversing and inverting the picking sequence is again 1212. The modified preferences are as follows.
\begin{align*}
  \succ_1'':\quad &d,c,b,a&
  \succ_2'':\quad &d, a, c, b
\end{align*}

Under picking sequence 1212 and profile $\succ''$, the items are picked as follows: $d, a, c, b$. We reverse this ordering to obtain the following crossout profile:
\begin{align*}
  \succ_1'':\quad &b, c, a, d&
  \succ_2'':\quad &b, c, a, d.
\end{align*}

Under this profile and original picking sequence 1212, 1 gets $\{b,a\}$ and $2$ gets $\{c,d\}$.
Also note that the SPNE path is as follows: 1 gets $b$, 2 gets $c$, 1 gets $a$ and then 2 gets $d$.
In contrast, in the truthful assignment, 1 gets $a$ and $c$.
\end{example}

\section{Pareto Optimality of Pure Nash Equilibria}\label{sec:pareto}

We next consider the Pareto optimality of equilibria.
An allocation $S$ is \emph{at least as preferred  with respect to pairwise comparisons by a given agent $i$} as allocation $T$, if there exists an 
an injection $f$ from $T$ to $S$ such that for each item $o\in T$, $i$ prefers $f(o)$ at least as much as $o$. 
We note that an agent strictly prefers $S$ over $T$ with respect to pairwise comparisons if $S$ results from $T$ by a sequence of replacements of an item in $T$ with a strictly more preferred item.
Note that the pairwise comparison relation is transitive but not necessarily complete. We will focus on Pareto optimality with respect to pairwise comparisons. 

We first show there exists a PNE whose outcome is Pareto optimal with respect to pairwise comparisons. 
Hence, unlike some other games, Pareto optimality is not incompatible with Nash equilibria in sequential allocation.

\begin{theorem}
The outcome of the bluff profile is Pareto optimal with respect to pairwise comparisons. 
\end{theorem}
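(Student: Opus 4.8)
The plan is to argue that the outcome of the bluff profile equals the truthful sequential allocation outcome (Lemma~\ref{lemma:bluff-charac}), and then show that this truthful outcome cannot be Pareto dominated with respect to pairwise comparisons. The key structural fact I would exploit is the ``greedy'' nature of sequential allocation: when an agent takes a turn, they receive the most preferred item still available to them. I would set up a proof by contradiction: suppose some allocation $T = (T_1,\dots,T_n)$ Pareto dominates the truthful outcome $S = (S_1,\dots,S_n)$ with respect to pairwise comparisons, meaning each agent $i$ weakly prefers $T_i$ to $S_i$ via an injection $f_i$, and at least one prefers it strictly.

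**First I would** fix the picking sequence and index the turns $1,2,\dots,m$, letting $a_t$ be the agent at turn $t$ and $g_t$ the item that agent picks (under truthful play). I would consider the \emph{first} turn $t^\ast$ at which the dominating allocation $T$ ``deviates'' in a detectable way — more precisely, look at the earliest turn $t^\ast$ such that agent $a_{t^\ast}$ under $T$ does not receive $g_{t^\ast}$, or more carefully, such that the set of items consumed in turns $1,\dots,t^\ast$ differs between $S$ and $T$. Before $t^\ast$, the two allocations have handed out exactly the same set of items (though possibly we need that they hand out the same items to the same agents — this requires a small argument using the injections). The item $g_{t^\ast}$ is, by greediness, the most preferred item of agent $a_{t^\ast}$ among everything not consumed in turns $1,\dots,t^\ast-1$; under $T$, agent $a_{t^\ast}$ must be ``missing'' $g_{t^\ast}$ from their bundle while it went to someone else (or was received at a later turn by the same agent — but item counts per agent are fixed by the sequence, so someone else gets it, or $a_{t^\ast}$ gets it later which is fine).

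**The crux** is then a counting/injection argument showing the domination is impossible: since agent $a_{t^\ast}$ has $g_{t^\ast}$ in $S$ but not in $T$, the injection $f$ for that agent must map $g_{t^\ast}$ to some strictly more preferred item $o' \succ g_{t^\ast}$ that appears in $S_{a_{t^\ast}}$... but wait — more carefully, $f$ maps $T_{a_{t^\ast}}$ into $S_{a_{t^\ast}}$, so I should think about which item of $S$ is ``uncovered.'' The standard move: because $g_{t^\ast}$ was available at turn $t^\ast$ and is agent $a_{t^\ast}$'s favorite among then-available items, every item agent $a_{t^\ast}$ holds in $S$ that is strictly better than $g_{t^\ast}$ was picked by that agent at some turn \emph{before} $t^\ast$; the same items must be held in $T$ by the same agent (by minimality of $t^\ast$, using that the prefix allocations agree). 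This forces $f$ to fail to be weakly-improving on $g_{t^\ast}$'s ``slot'', contradicting the definition of domination. I would also need the symmetric observation that \emph{other} agents are not hurt, which follows because the items they receive in $T$ but not $S$ were consumed by agent $a_{t^\ast}$ or earlier agents in $S$.

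**The main obstacle** I anticipate is making precise the induction on turns that keeps the prefixes synchronized: a priori a dominating $T$ could shuffle \emph{which} agent gets \emph{which} early item even while consuming the same set, and one must verify this cannot happen without violating the injection condition at some agent. I expect this is handled by an exchange argument — if agent $a_t$ doesn't get $g_t$ in $T$ but $g_t$ is consumed early by another agent $j$, then $j$ is ``overpaid'' on that slot while $a_t$ is ``underpaid'', and tracing the injections $f_{a_t}$ and $f_j$ around a cycle of such swaps yields an item that must be strictly improved infinitely often, which is impossible in a finite set. Once the synchronization lemma is in place, the contradiction at turn $t^\ast$ is the short part. I would present the synchronization as the technical heart and the greedy-favorite observation as the conceptual punchline.
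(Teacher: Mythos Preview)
Your proposal is sound but takes a different route from the paper. The paper's argument is a two-line reduction: by Lemma~\ref{lemma:bluff-charac}(iii) the bluff outcome coincides with the truthful outcome, and the paper then simply \emph{cites} Brams and King~\cite{BrKi05a} for the fact that the truthful sequential-allocation outcome is Pareto optimal with respect to pairwise comparisons. You instead reprove the Brams--King fact directly via an earliest-deviation induction along the picking sequence. That is a legitimate, self-contained alternative; the ``synchronisation'' step you flag as the technical heart is exactly the right inductive invariant. One correction worth making before you write it out: under the paper's convention, if $T$ Pareto dominates $S$ then the injections run $f_i : S_i \to T_i$ with $f_i(o)\succeq_i o$, not $T_i\to S_i$ as you conclude after your ``wait''. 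Also, since both $S$ and $T$ are complete assignments and the picking sequence fixes bundle sizes, each $f_i$ is a bijection, which lets you finish turn $t^\ast$ by a straight pigeonhole on the set of items agent $a_{t^\ast}$ strictly prefers to $g_{t^\ast}$, rather than the cycle-chasing you sketch. The trade-off between the two approaches is the obvious one: the paper's citation keeps the proof to one sentence but outsources the content; your direct argument is longer but exposes why greediness forces Pareto optimality.
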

The argument is as follows. Since the outcome of the bluff profile is the same as the outcome of the truthful profile and since the outcome of each truthful profile is Pareto optimal with respect to pairwise comparisons~\citep{BrKi05a}, the outcome of the bluff profile is Pareto optimal with respect to pairwise comparisons as well. 
Although the argument for the theorem is simple, it shows the
following: if the truthful outcome satisfies some normative properties
such as envy-freeness or other fairness properties~\citep{AGMW15a}, we know that there exists at least one PNE which results in an assignment with the same normative properties.   
The theorem above is in sharp contrast with the result in \cite{BrSt79a}
that there exist utilities under which no SPNE assignment is Pareto optimal
with respect to pairwise comparisons.
Other relevant papers that deal with implementing Pareto optimal outcomes in other
settings include \citep{KMT15a} and \citep{Moul84b}.

Next, we show that there may exist a PNE  whose outcome is not Pareto optimal with respect to pairwise comparisons. The statement holds even if the PNE in question is in PNE with respect to all utilities consistent with the ordinal preferences!

\begin{theorem}
An assignment may not be Pareto optimal with respect to pairwise comparisons even if it is a result of a preference profile that is in PNE for all utilities consistent with ordinal preferences. 
\end{theorem}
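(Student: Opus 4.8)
The plan is to construct an explicit counterexample: a picking sequence together with ordinal preferences for a small number of agents (three agents and four to six items should suffice, since with two agents the bluff/crossout analysis and Bouveret--Lang's result make such pathologies harder to arrange) and a \emph{reported} profile $\succ'$ such that (a) the outcome $SA(\succ')$ is not Pareto optimal with respect to pairwise comparisons — there is another assignment obtained by replacing some agent's items with strictly preferred ones while no agent is worse off in the pairwise sense — and yet (b) $\succ'$ is in pure Nash equilibrium under \emph{every} utility profile consistent with the underlying ordinal preferences. The natural candidate for $\succ'$ is a profile where the agents have ``distorted'' reports that lock the allocation in place, so the key design goal is to make every agent's received bundle already a best response robustly.

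First I would fix the picking sequence and the true ordinal preferences, then hand-design a reported profile $\succ'$ and compute $SA(\succ')$. The second step is to exhibit the Pareto-dominating assignment explicitly: identify two agents $i,j$ and items $o \in SA(\succ')(i)$, $o' \in SA(\succ')(j)$ such that swapping them (or a short chain of such swaps) makes at least one of them strictly better off by pairwise comparison and the other no worse; it is cleanest if the dominating assignment is itself \emph{also} achievable by sequential allocation from some honest-looking profile, so one can point to \citet{BrKi05a}-style reasoning for why it is a legitimate Pareto improvement. The third step, and the technical heart, is verifying the PNE condition for all consistent utilities. Here I would invoke the structural fact used for the bluff profile: if I arrange $\succ'$ so that for each agent $i$ the other agents' reports, \emph{from $i$'s viewpoint}, collapse to a single representative order (e.g. by making all of $N\setminus\{i\}$ report the same preference, or by a careful turn-merging argument), then Lemma~\ref{lemma:nto2} tells me $i$'s best response yields a fixed allocation independent of the cardinal utilities. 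It then remains only to check, under \emph{lexicographic} utilities, that the bundle $i$ already gets under $\succ'$ equals that best-response bundle — a finite check — and the full ``all utilities'' claim follows exactly as in the proof of Theorem~\ref{thm:consistent}.

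The main obstacle is the tension between requirements (a) and (b): robust equilibria tend to be ``efficient-looking'' because an agent who is getting a dominated bundle usually has a profitable deviation. The trick must be that the Pareto improvement requires \emph{coordinated} changes by two agents simultaneously — a swap that benefits $i$ only if $j$ also moves — so that unilaterally no single agent can realize it, which is precisely what a Nash (as opposed to strong/coalitional) equilibrium permits. Concretely, I expect the counterexample to hinge on an item that agent $i$ values highly but that, given the fixed reports of everyone else, $i$ simply cannot obtain on any of $i$'s turns (it is always grabbed earlier by someone whose report places it high), while the only way to free it up would be for that other agent to lower it — which that other agent has no incentive to do because doing so costs them under some consistent utility. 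Designing the turn order so that this blocking structure is symmetric enough to survive the Lemma~\ref{lemma:nto2} reduction for \emph{every} agent is the delicate part; I would start from the smallest sequence exhibiting a ``stuck swap'' and enlarge only as forced. Once the instance is found, writing it up is just the three displayed computations (the outcome, the dominating assignment, and the lexicographic best-response check) plus a one-line appeal to Lemma~\ref{lemma:nto2}.
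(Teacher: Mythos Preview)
Your high-level plan is right---three agents, a hand-built reported profile $\succ'$ whose outcome is pairwise Pareto-dominated, and a verification that no agent can unilaterally improve under any consistent utilities---and your diagnosis that the improvement must require \emph{coordinated} deviation is exactly the phenomenon the paper exploits. But the specific technical route you propose for step~(b) has a gap.

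You plan to invoke Lemma~\ref{lemma:nto2} by arranging that, from each agent $i$'s viewpoint, the remaining agents' reports collapse to a single order. With three agents, requiring that agents $2,3$ agree (from $1$'s viewpoint), that $1,3$ agree (from $2$'s viewpoint), and that $1,2$ agree (from $3$'s viewpoint) forces all three reports to be identical. But then you are essentially back at a bluff-type profile, and by Lemma~\ref{lemma:bluff-charac} the outcome coincides with some truthful run of sequential allocation, which is Pareto optimal with respect to pairwise comparisons. So the very device you want to use to certify ``PNE for all utilities'' rules out the Pareto failure you need. Your fallback ``careful turn-merging argument'' is not spelled out, and it is unclear how it would escape this obstruction.

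The paper avoids Lemma~\ref{lemma:nto2} entirely here. It writes down a concrete instance (three agents, six items, sequence $123123$) with a reported profile $\succ'$ in which the three reports are all \emph{different}, and then verifies the PNE-for-all-utilities claim directly by a pairwise-dominance argument: for each agent $i$, since $i$ always receives exactly two items, one lists the (very few) two-item bundles that pairwise-dominate $i$'s current bundle and checks that none of them is attainable against the fixed reports of the others. Because any bundle not pairwise-dominating the current one has weakly lower utility under \emph{every} consistent utility function, this finite check suffices. So the ``all utilities'' robustness comes not from a reduction to two agents but from the combinatorics of pairwise dominance on size-two subsets; that is the idea you are missing.
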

\begin{proof}
Consider the preference profile:
  \begin{align*}
  \succ_1:&\quad a,b,c,d,e,f&
  \succ_2:&\quad e,f,b,a,d,c\\
  \succ_3:&\quad c,f,e,d,a,b
  \end{align*}

Let the sequence be 123123.   
Then the outcome of the truthful preference profile can be summarized as 
  \begin{align*}
  1:&\quad \{a,b\}&
  2:&\quad \{e,f\}&
  3:&\quad \{c,d\}
  \end{align*}

Consider the following profile $\succ'$:
  \begin{align*}
  \succ_1':&\quad c,f,a,b, d,e&
  \succ_2':&\quad b,a,e, c,d,f\\
  \succ_3':&\quad f,e,d, a, b, c
  \end{align*}
 
    Then the outcome of the profile $\succ'$ can be summarized as 
      \begin{align*}
      1:&\quad \{c,a\}&
      2:&\quad \{b,e\}&
      3:&\quad \{f,d\}
      \end{align*}

We argue that the profile $\succ'$ is in PNE. 
In his reported preference $\succ_1'$, agent $1$ gets $\{a,c\}$. The only better outcome agent $1$ can get is $\{a,b\}$.
If he goes for $a$ first, he does not get $c$ or $b$. If he goes for $b$
first, he does not get $a$. So agent $1$ plays his best response for
all utilities consistent with his ordinal preferences.  

In his reported preference $\succ_2'$, agent $2$ gets $\{e,b\}$.
The only better outcome agent $2$ can get is $\{e,f\}$.
Now agent $2$ in his best response will try to get $\{e,f\}$. If agent
$2$ tries to pick $f$ first, he will not get $e$. Hence agent $2$
plays his best response for all utilities consistent with his ordinal preferences. 

Finally, for agent $3$, he cannot get $c$. 
The best he can get is $\{e,f\}$. If $3$ goes for $e$ first, then he does not get $f$. If $3$ goes for $f$ first, he can only get $d$. The best he can get is $\{f,d\}$ so his reported preference is his best response for all utilities consistent with the ordinal preferences.  
\end{proof}
  

%

%

\section{Advantage of Commitment}\label{sec:commitment}

In prior work on strategic aspects of sequential allocation,
the focus has been on computing manipulations or equilibria.
We now consider another strategic aspect: Stackelberg strategies
to commit to in order to obtain outcomes that are better for the individual agent.
In this setting, agent 1 (the leader) announces a preference $R$
of all the items, and commits to selecting, whenever it is his turn,
the highest-ranked item in $R$ that is not yet taken.
The following example illustrates a leadership advantage.

\begin{example}
There are 2 agents and 4 items denoted $a$,$b$,$c$,$d$.
Suppose the agents choose items in order $1212$.
The  ordinal preferences are 
  \begin{align*}
  \succ_1:&\quad a, d, c, b&
  \succ_2:&\quad a, b, d, c
  \end{align*}

Then in an SPNE, agent $1$ takes item $a$, then agent
$2$ takes item $d$ (since agent $1$ will not take item $b$,
it is okay for agent $2$ to take $d$, ending up with $b$ and $d$).
Then agent 1 takes $c$.
Also if agent 1 reports the truth $a, d, c, b$, then agent $2$
is guaranteed to get $b$ so he can report $d, b, a, c$ and get
$\{b,d\}$ which means that $1$ gets $\{a,c\}$.

However, consider the case where agent $1$ is leader, and announces
the preference list $\succ_1':\quad a, b, d, c.$
Then agent 2 must use a preference list that results in agent 2 taking item $b$ first. Agent $1$ has a credible threat to take item $b$, if agent $2$ does not take it next (despite the fact that agent $1$ doesn't value item $b$). So, agent 1 gets items $a$ and $d$.
\end{example}

This raises the following question. For two agents, what is the complexity
of finding the best preference report for the leader,
assuming that the follower will best-respond.
Next, we consider an interesting special case in which the problem can be solved in polynomial time.
\begin{theorem}\label{thm:dp}
For $n=2$ and any fixed picking sequence,
there is an algorithm whose runtime is polynomial in the number of items $m$,
to compute an optimal Stackelberg strategy for agent $1$ when
agent $2$ has a constant number of distinct values for items.
\end{theorem}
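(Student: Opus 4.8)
The plan is to compute the leader's optimal report by dynamic programming over the picking sequence, leveraging the fact that $k$, the number of distinct values agent $2$ assigns to items, is a constant.

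First I would reformulate the leader's objective. With two agents every item is allocated, so agent $1$'s bundle is exactly $O$ minus agent $2$'s bundle, and by additivity maximizing $u_1$ of agent $1$'s bundle is the same as minimizing $u_1$ of agent $2$'s bundle. Now partition $O$ into the value classes $C_1,\dots,C_k$ determined by agent $2$ (so agent $2$ is indifferent within each $C_j$), and within each class list the items in decreasing order of $u_1$-value. An exchange argument then shows we may restrict attention to outcomes that are \emph{class-monotone}: within each $C_j$ the items ending up with agent $1$ are the $u_1$-largest and those ending up with agent $2$ the $u_1$-smallest. Indeed, if in some realizable play agent $2$ holds a $u_1$-large item of $C_j$ while agent $1$ holds a $u_1$-small item of $C_j$, swap their identities and relabel agent $1$'s committed list accordingly; since agent $2$ cannot tell the two items apart, every play agent $2$ might consider is reproduced verbatim, so the profile is still a best response, while agent $1$ is no worse off. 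The payoff of this normalization is that the multiset of still-unallocated items at any stage is captured by the vector $(n_1,\dots,n_k)$ of per-class residual counts, of which there are only $O(m^k)$.

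Next I would set up a DP over triples consisting of a position $t$ in the picking sequence, the residual-count vector $(n_1,\dots,n_k)$, and a bounded amount of bookkeeping recording how agent $1$'s committed list interleaves the classes with respect to the items still in play (this is what distinguishes a genuine list-strategy from an arbitrary contingent strategy and, by class-monotonicity and $k=O(1)$, again ranges over only polynomially many values). The recursion is processed backward from the last turn. At a turn of agent $1$ we maximize over which class to take the next item from (which, by class-monotonicity, is the $u_1$-largest remaining item of that class), updating the bookkeeping. At a turn of agent $2$ we must install agent $2$'s best response: since the bookkeeping pins down agent $1$'s committed behaviour on every branch, from agent $2$'s side the continuation is a single-agent sequential-allocation problem against a fixed opponent, whose value I compute in the same backward pass (this is the two-agent best-response computation behind Lemma~\ref{lemma:nto2} and \citep{BoLa14b}); agent $2$ then moves to a continuation optimal for her, with ties broken in agent $1$'s favour (the optimistic Stackelberg convention, implementable because among her indifferent best responses agent $1$ can steer her by the choice of $R$). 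Reading the value at the initial state (all items present, position $1$) gives the optimal leader utility, and tracing the maximizing choices reconstructs an optimal $R$; the table has polynomially many entries, each filled in polynomial time.

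The main obstacle is establishing that this polynomial-size state is genuinely sufficient --- that neither agent's optimal behaviour ever depends on finer information about the remaining items than the class-count-and-bookkeeping summary. Concretely this needs: (i) carrying out the class-monotone relabelling carefully enough that the relabelled list $R'$ reproduces agent $1$'s play on \emph{every} branch agent $2$ might explore when computing a best response, not merely on the realized path; (ii) verifying that the backward recursion faithfully captures best responses of the restricted form ``always take the $R$-best available item'', rather than arbitrary contingent strategies --- which is exactly why the commitment bookkeeping is part of the state; and (iii) checking that tie-breaking in agent $1$'s favour is achievable by some consistent choice of $R$. Once these points are in place, correctness and the $\mathrm{poly}(m)$ running-time bound are routine, with the constancy of $k$ entering only to keep the number of residual vectors and bookkeeping configurations polynomial.
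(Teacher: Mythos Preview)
Your class-monotonicity reduction matches the paper's Proposition~\ref{claim:rank} almost exactly, and a DP with $O(m^k)$ residual-count states is the right ballpark. But your DP runs over positions $t$ in the picking sequence, and this is where the plan has a real gap that you flag but do not close. Agent~1 does not choose adaptively at each turn; he commits to a full ranking $R$ up front, and agent~2's best response depends on $R$ in its entirety (agent~2 reasons about what agent~1 would take on every branch). A turn-by-turn DP that lets agent~1 ``maximize over which class to take from'' at each of his turns computes an optimal \emph{adaptive} strategy, not a list strategy, unless the bookkeeping pins down the entire remaining interleaving of $R$. The number of class-monotone interleavings is $\binom{m}{m_1,\ldots,m_k}$, not polynomial, and you give no argument that coarser bookkeeping suffices; appeals to Lemma~\ref{lemma:nto2} or \cite{BoLa14b} do not help here, since those compute a best response to a \emph{given} $R$, not to a partially specified one.

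The paper sidesteps this by not processing the picking sequence turn-by-turn at all. Its key device is Observation~\ref{obs:tokens}: agent~2's best response to a fixed $R$ amounts to placing $\ell$ tokens at the positions of the 2's in the sequence and sliding them \emph{downward} in $R$. This lets the DP build $R$ from the bottom up: the state $U(j_1,\ldots,j_k;\ell)$ says the top $j_i$ items of each class fill the top $j=\sum_i j_i$ ranks and agent~2 receives $\ell$ of them. The recursion chooses which $o_{i,j_i}$ sits at rank $j$ and whether a token lands there (equations~(\ref{eq:1})--(\ref{eq:2})). Because tokens only move down, the subproblem on ranks $1,\ldots,j-1$ with $\ell$ or $\ell-1$ tokens is genuinely independent of what occupies rank $j$---exactly the decomposition your per-turn DP lacks. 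The picking sequence enters only through which values of $\ell$ are feasible at each $j$, not through step-by-step simulation.
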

We make the assumption, standard in the study of optimal
Stackelberg strategies, that if agent 2 has more than one
best response, then agent 1 breaks the tie in his (agent 1's)
favour.
Let $k$ (constant) be the number of distinct values that agent
2 has for items. Agent 1 has to identify a ranking of the items such that if
agent  2 best-responds, agent 1's total value is maximised.


It is convenient to proceed by solving the following slight generalisation
of the problem.
Given a picking sequence $P$ (a sequence of 1's and 2's of length $m$),
we add a parameter $\ell$, where $\ell$ is at most the number of 2's
in $P$, and agent 2 may receive only $\ell$ items.
We make the following observation:


\begin{observation}\label{obs:tokens}
Suppose (for picking sequence $P$) agent 2 is allowed to receive $\ell$ items.
We can regard  agent 2's selection of items as working
as follows. Given agent 1's preference ranking $\succ_1$, agent 2 places a token on the $\ell$
items in $\succ_1$ whose positions correspond to the positions of 2 in P.
Agent 2 is allowed to move any token from any item $x$ to an item $x'$, provided that $x\succ_1 x'$,
subject to the constraint that tokens lie at distinct items.
Finally, items marked with tokens are the ones that agent 2 receives.
Agent 2 chooses the most valuable set that can be obtained in this way.
\end{observation}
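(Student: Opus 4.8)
The plan is to show that the family of item-sets agent 2 can obtain, as the reported ranking $\succ_2$ ranges over all rankings (with agent 1 fixed at $\succ_1$ and the mechanism run on $P$), coincides \emph{exactly} with the family of token-reachable sets. Once the two families are shown to be equal, the final ``most valuable set'' clause is immediate, since agent 2 just optimises his own additive value over this common family. Throughout I would order the items by agent 1's ranking as $a_1 \succ_1 a_2 \succ_1 \cdots \succ_1 a_m$, write $t_1 < t_2 < \cdots < t_\ell$ for the turn-indices at which agent 2 is scheduled to collect an item, and record a candidate bundle $W_2 = \{a_{i_1},\dots,a_{i_\ell}\}$ by its sorted list of agent-1 ranks $i_1 < \cdots < i_\ell$. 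The useful baseline is that if agent 2 copies $\succ_1$, the items are handed out straight down $\succ_1$, so agent 2 receives precisely $\{a_{t_1},\dots,a_{t_\ell}\}$: this is the default token placement.

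First I would pin down the token-reachable family combinatorially. The tokens start at ranks $t_1<\cdots<t_\ell$ and may only move to $\succ_1$-worse (higher-rank) items while remaining at distinct positions, so a one-line monotonicity argument --- the number of tokens lying below any fixed rank threshold can only decrease under legal moves --- shows the reachable sets are exactly those with $i_k \ge t_k$ for every $k$, equivalently those satisfying $|W_2 \cap \{a_1,\dots,a_r\}| \le \#\{\,j \le r : P_j = 2\,\}$ for every $r$. A valuable by-product is that reachability is independent of the order in which tokens are moved, which is what licenses a turn-by-turn decomposition of the mechanism.

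I would then prove the two inclusions simultaneously by induction on the length of $P$, peeling off the first turn, which also reconnects the abstract inequality to the actual greedy play. If $P_1 = 1$, then whatever $\succ_2$ is, agent 1 takes his favourite $a_1$ at the first turn, so $a_1$ is never available to agent 2; deleting $a_1$ and the first turn gives a smaller instance with ranks and turn-indices each shifted down by one, and the inductive hypothesis transfers (on the token side $a_1$ carries no token and is unreachable). If $P_1 = 2$, then by placing any $a_c$ at the top of $\succ_2$ agent 2 can seize exactly $a_c$ first and is free to rank the rest arbitrarily; this matches moving the first token from rank $1$ down to rank $c$ and then playing the residual token game on $O\setminus\{a_c\}$ with $\ell-1$ tokens. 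The concrete witness for the ``reachable $\Rightarrow$ achievable'' direction is the explicit ranking that lists the target items $a_{i_1},\dots,a_{i_\ell}$ first (in increasing $\succ_1$-rank) and the rest afterwards; the peeling shows it delivers exactly $W_2$, and the constraints $i_k \ge t_k$ are precisely what guarantee that agent 1's greedy picks never reach a target item before agent 2's corresponding turn.

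The step I expect to be the main obstacle is the $P_1=2$ branch, where one deletes an item $a_c$ of \emph{arbitrary} rank: I must check that the residual token game on $O\setminus\{a_c\}$ with $\ell-1$ tokens --- its default positions and its distinctness constraint --- lines up with the sub-instance produced by the mechanism, so that the recursion neither loses nor invents reachable bundles. The order-independence established in the first step is what makes this clean, since it justifies committing the first token before the others. The conceptual crux inside the necessity direction is the greedy-sweep fact underlying the base case $i_1 \ge t_1$: since turns $1,\dots,t_1-1$ are all agent 1's, he sweeps the top $t_1-1$ items, so agent 2 can never obtain any item $\succ_1$-better than $a_{t_1}$; the induction simply iterates this bound. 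Finally, I would remark that the generalisation to a cap of $\ell$ items is handled by placing only $\ell$ tokens at the relevant agent-2 turn-positions, leaving every argument above unchanged.
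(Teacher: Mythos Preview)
The paper states this as an \emph{Observation} and supplies no proof at all; it is treated as self-evident and used directly in the subsequent dynamic program. So there is no paper argument to compare against.

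Your proposal is correct and considerably more thorough than what the paper deems necessary. The key combinatorial reformulation---that the token-reachable size-$\ell$ sets are exactly those whose sorted $\succ_1$-ranks $i_1<\cdots<i_\ell$ satisfy $i_k\ge t_k$ for every $k$, equivalently the prefix bound $|W_2\cap\{a_1,\dots,a_r\}|\le \#\{j\le r:P_j=2\}$---is the right invariant, and both inclusions go through as you outline. For ``achievable $\Rightarrow$ reachable'', the cleanest way to discharge the $P_1=2$ branch you flag as the obstacle is actually to bypass the induction and argue the prefix inequality directly: whenever it is agent~1's turn and some item among $a_1,\dots,a_r$ remains, agent~1 necessarily takes one of them, so by the first time all of $a_1,\dots,a_r$ are gone agent~1 has taken at least $\#\{j\le r:P_j=1\}$ of them, giving the bound. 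For ``reachable $\Rightarrow$ achievable'', your explicit witness report (target items first, in $\succ_1$-order) works exactly because the condition $i_k\ge t_k$ ensures that agent~1, who by turn $t_k-1$ has made $t_k-k$ greedy picks, has not yet reached rank $i_k$. The order-independence of token moves that you invoke follows from the sorted-majorization characterisation, and the cap at $\ell$ is handled as you say.
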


\begin{proposition}\label{claim:rank}
We may assume that in an optimal (for agent 1) ranking of the items,
if items $x$ and $x'$ have the same value for  agent 2,
and  agent 1 values $x$ higher than $x'$, then  agent 1
ranks $x$ higher than $x'$.
\end{proposition}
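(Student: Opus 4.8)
The plan is to argue by an exchange argument: starting from any optimal ranking $\succ_1$ that violates the stated property, produce another optimal ranking in which $x$ is moved just above $x'$, and iterate until the property holds for all such pairs. So suppose $\succ_1$ is optimal for agent 1, $x$ and $x'$ have the same value for agent 2, agent 1 values $x$ higher than $x'$, but $\succ_1$ ranks $x'$ above $x$. Let $\succ_1'$ be obtained from $\succ_1$ by removing $x$ from its position and reinserting it immediately above $x'$ (leaving all other items in their relative order). I would first observe that this only changes the relative order of $x$ versus the items strictly between $x'$ and $x$ in $\succ_1$, and that every such intermediate item has agent-1 value strictly between that of $x'$ and $x$ — in particular, strictly less than $x$'s value.

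The key step is to show agent 2's best response value is unchanged. Using the token characterisation of Observation~\ref{obs:tokens}, I would argue that any set of items agent 2 can obtain under $\succ_1$ can also be obtained under $\succ_1'$ with equal value, and vice versa. Since $x$ and $x'$ have the same value for agent 2, the two rankings $\succ_1$ and $\succ_1'$ are indistinguishable to agent 2 if we think of the two items as an unordered pair occupying two (reordered) slots: the initial token placement depends only on which slots hold 2's in $P$, and the allowed token moves ($x \succ_1 x'$ means a token may slide from $x$ to $x'$) respect the fact that agent 2's value is monotone in agent 1's ranking — but swapping $x$ and $x'$, which are value-equivalent for agent 2, does not change the collection of attainable value-multisets. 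Concretely I would check: given an attainable token configuration under $\succ_1$, relabel any token on $x$ as a token on $x'$ and vice versa; this is attainable under $\succ_1'$ and yields the same total value for agent 2 because $v_2(x)=v_2(x')$. Hence agent 2's optimal value, and indeed the set of agent-2-optimal attainable value-multisets, is the same for $\succ_1$ and $\succ_1'$.

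Next I would handle agent 1's payoff and the tie-breaking convention. Among agent 2's best responses under $\succ_1'$, agent 1 picks the one best for himself (agent 1). I claim this is at least as good for agent 1 as under $\succ_1$: take an agent-2-optimal outcome $T$ under $\succ_1$ that is agent-1-optimal among such; the token relabelling above gives a corresponding agent-2-optimal outcome $T'$ under $\succ_1'$ with $v_2(T')=v_2(T)$, and I would show $v_1(T') \ge v_1(T)$ — in $T'$ the pair of slots formerly holding $\{x\text{-or-}x'\}$ now holds the same pair of items, so agent 1's value on those two items is unchanged, and all other items are unaffected; thus $v_1(T') = v_1(T)$. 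Combined with the reverse direction, $\succ_1'$ is also optimal for agent 1. Finally, each such swap strictly decreases the number of violating pairs (it fixes the $(x,x')$ inversion and creates no new violation of the stated type, since the only relative orders changed are between $x$ and strictly-worse-for-agent-1 items), so after finitely many swaps we reach an optimal ranking with the desired property.

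The main obstacle I anticipate is the token-relabelling step: one must be careful that after relabelling a token on $x$ as a token on $x'$ (or vice versa), the resulting configuration is genuinely reachable under $\succ_1'$ — i.e. that all the monotonicity constraints "a token may move from $y$ to $z$ only if $y \succ z$" are still satisfied for the other tokens, given that only the positions of $x$ and $x'$ have moved relative to the intermediate items. This requires checking that no other token's legal move is destroyed or spuriously created by the reinsertion; since the intermediate items all have agent-1 value strictly below $x$, moving $x$ past them only shrinks the down-set reachable from $x$, and I would verify this does not block any move that agent 2 actually needs in an optimal response.
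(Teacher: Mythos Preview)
Your overall plan---an exchange argument---is the right idea, but the particular exchange you chose is the wrong one, and two of your intermediate claims fail. The assertion that ``every such intermediate item has agent-1 value strictly between that of $x'$ and $x$'' is unjustified: $\succ_1$ here is agent~1's \emph{strategic} ranking, not his true utility order, so items sitting between $x'$ and $x$ in $\succ_1$ may have arbitrary agent-1 values. This immediately breaks your termination argument. For instance, take three items $x',y,x$, all with the same agent-2 value and with agent-1 values $3,20,10$ respectively, ranked $x'\succ_1 y\succ_1 x$; there are two violating pairs, $(x,x')$ and $(y,x')$. Moving $x$ just above $x'$ gives $x\succ_1' x'\succ_1' y$, which still has two violating pairs, namely $(y,x)$ and $(y,x')$, so the count does not strictly decrease. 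Your token-relabelling step is also in trouble: because your operation shifts every intermediate item down one slot, and those items need not share agent~2's value with $x$, the two rankings are not value-equivalent for agent~2 position by position; the initial token placements (determined by the positions of the $2$'s in the picking sequence) land on genuinely different items in the affected range, and swapping tokens on $x$ and $x'$ alone does not give a bijection of reachable configurations. You flag this as the main obstacle, and it is a real one.

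The paper's fix is to simply \emph{swap} the positions of $x$ and $x'$ in $\succ_1$, rather than reinserting $x$ just above $x'$. Since $u_2(x)=u_2(x')$, the swapped ranking is indistinguishable from $\succ_1$ to agent~2 at every position, so agent~2's best responses correspond bijectively via $x\leftrightarrow x'$. By Observation~\ref{obs:tokens} a token on the higher-ranked of $\{x,x'\}$ can always slide to the lower-ranked one without loss to agent~2, so among agent~2's best responses there is one that, if it contains exactly one of $\{x,x'\}$, contains the lower-ranked. After the swap the lower-ranked is $x'$, which agent~1 values less, so agent~1 is weakly better off. Replacing your ``insert just above $x'$'' with this direct swap repairs all three issues at once.
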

\begin{proof}
We claim first that since $x$ and $x'$ have the same value to
 agent 2, then given any ranking by  agent 1, any best response
by  agent 2 can be modified to avoid an outcome where  agent 2
takes the higher-ranked of $\{x,x'\}$, but not the lower-ranked of
$\{x,x'\}$.

Noting Observation~\ref{obs:tokens},
if the higher-ranked of $\{x,x'\}$ has a token, but not
the lower-ranked of them, the token can be moved to the lower-ranked
of $\{x,x'\}$ without loss of utility to  agent 2.
If  agent 1 ranks the lower-valued of $\{x,x'\}$ higher in $\succ_1$,
they can be exchanged, and the new ranking (with the right best response
by  agent 2) is at least as good for  agent 1.
\end{proof}

\begin{paragraph}{Notation:}
{
Recall that $m$ denotes the number of items.
Let $S_1,\ldots,S_k$ be the partition of the items
into subsets that agent 2 values equally.
For $1\leq i\leq k$ let $m_i=|S_i|$.
Let $o_{i,j}\in S_i$ be the member of $S_i$ that has the $j$-th
highest value to  agent 1.
Let $S_i(j)\subseteq S_i$ be the set $\{o_1,\ldots,o_j\}$,
that is, the $j$ highest value (to agent 1) members of $S_i$.
Let $U(j_1,\ldots,j_k;\ell)$ be the highest utility that  agent 1
can get, assuming that items $S_1(j_1)\cup \cdots \cup S_k(j_k)$
are being shared, and  agent 2 is allowed to take $\ell$ of them,
where $\ell\leq m$. 
In words, we consider subsets of the $S_i$
obtained by taking the best items in $S_i$, and consider
various numbers of items that we limit  agent 2 may to receive.
}
\end{paragraph}

\begin{proof} (of Theorem~\ref{thm:dp})
If picking sequence $P$ contains $m'$ occurrences of ``2'',
we are interested in computing $U(m_1,\ldots,m_k;m')$
and its associated ranking.
We express the solution recursively be expressing
$U(j_1,\ldots,j_k;\ell)$ in terms of various values of
$U(j'_1,\ldots,j'_k;\ell')$, where $j'_i\leq j_i$ and $\ell'\leq \ell$,
and at least one inequality is strict.
Furthermore, for any values $j_1,\ldots,j_k,\ell$ we also evaluate
and remember agent 2's best response.
This can be seen to be achievable in polynomial time via dynamic programming,
since there are $O(m^{k+1})$ sets of values that can be taken
by these parameters.

We compute $U(j_1,\ldots,j_k;\ell)$ as follows.
Let $j=j_1+\ldots+j_k$ and assume that there are at least $\ell$
occurrences of ``2'' in the first $j$ entries of the picking
sequence.

By Proposition~\ref{claim:rank}, in an (agent 1)-optimal ranking
of items in $S_1(j_1)\cup\cdots\cup S_k(j_k)$, the lowest ranked item
must be one of $o_{1,j_1},\cdots,o_{k,j_k}$.
We consider two cases, according to whether or not agent 2
takes that lowest-ranked item.

Suppose agent 2 takes that item.
Then $U(j_1,\ldots,j_k;\ell)$ is given by:
\begin{equation}\label{eq:1}
\max_{i\in[k]} \Bigl( U(j_1,\ldots,j_{i-1},j_i-1,j_{i+1},\ldots,j_k;\ell-1) \Bigr)
\end{equation}

Alternatively agent 2 may fail to take that item, in which case
$U(j_1,\ldots,j_k;\ell)$ is given by:
\begin{equation}\label{eq:2}
\max_{i\in[k]} \Bigl( u_1(o_{i,j_i}) + U(j_1,\ldots,j_{i-1},j_i-1,j_{i+1},\ldots,j_k;\ell) \Bigr)
\end{equation}
where $u_1(o_{i,j_i})$ is agent 1's value for item $o_{i,j_i}$
lowest-ranked.

By way of explanation of~(\ref{eq:1}) and~(\ref{eq:2}),
in the case of~(\ref{eq:1}) where agent 2 takes $o_{i,j_i}$,
agent 1's utility will
be that of the optimal ranking of the other $j-1$ items
under the constraint that agent 2 only gets to take $\ell-1$
of them. If agent 2 does not take this lowest-ranked item $o_{i,j_i}$,
then~(\ref{eq:2}) gives agent 1's utility as his value 
$u_1(o_{i,j_i})$ for that item, plus the best outcome for agent 1
assuming agent 2 may take $\ell$ items from amongst the other
$j-1$ items.

In checking which case applies for a given choice of $i$ and
corresponding item $o_{i,j_i}$,
we check whether the optimal ranking of the other items for agent 2
taking $\ell-1$ of them, when extended to  agent 2's selection of
that additional item, is indeed a best-response for agent 2
given that he gets $\ell$ of all the items.
This can be done efficiently, since best responses can be efficiently
computed~\citep{BoLa14b}.
\end{proof}

\section{Conclusion}

Sequential allocation is a simple and frequently used mechanism
for resource allocation.  Its
strategic aspects have been formally studied
for the last forty years. To our surprise, some fundamental questions
have been unaddressed in the literature about sequential allocation
when viewed as an one shot game. This is despite the fact that in many
settings, it is essentially played as an one shot game. 
We have therefore studied in detail 
the pure Nash equilibrium of sequential allocation
mechanisms. 
We presented a number of results on Nash dynamics,
as well as on the computation of pure Nash equilibrium,
and the Pareto optimality of equilibria. In particular, 
we presented the first polynomial-time algorithm to compute a PNE that applies to all utilities consistent with the ordinal preferences. 
We have also explored some other new directions such as Stackelberg strategies that have so far not been examined in sequential allocation.


%


 %

\end{document}